\NewDocumentCommand{\evalat}{sO{\big}mm}{%
  \IfBooleanTF{#1}
   {\mleft. #3 \mright|_{#4}}
   {#3#2|_{#4}}%
}
\newtheorem{proposition}{Proposition}
\newtheorem{remark}{Remark}
\def\blfootnote{\xdef\@thefnmark{}\@footnotetext}
\begin{document}
\title{\huge{Novel Expressions for the Outage Probability and Diversity Gains in Fluid Antenna System}}
\author{Jos\'e~David~Vega-S\'anchez, \textit{Member, IEEE}, Arianna Estefanía López-Ramírez, Luis~Urquiza-Aguiar, \textit{Member, IEEE}, and Diana~Pamela~Moya~Osorio, \textit{Senior Member, IEEE} }

\maketitle

\blfootnote{\noindent Manuscript received MONTH xx, YEAR; revised XXX. The review of this paper was coordinated by XXXX. The work of Luis Urquiza-Aguiar was supported by the Escuela Polit\'ecnica Nacional. The work of D. P. M. Osorio is supported by the Academy of Finland, project FAITH under Grant 334280.}

\blfootnote{\noindent
J.~D.~Vega-S\'anchez is with the Faculty of Engineering and Applied Sciences (FICA), Telecommunications Engineering, Universidad de Las Am\'ericas
(UDLA), Quito 170124, Ecuador (E-mail: $\rm jose.vega.sanchez@udla.edu.ec$), and also with the Departamento de Electr\'onica, Telecomunicaciones y Redes de Informaci\'on, Escuela Polit\'ecnica Nacional (EPN),
Quito,  170525, Ecuador. 
}
\blfootnote{\noindent 
A. E. López-Ramírez, and L.~Urquiza-Aguiar are with the  
Departamento de Electr\'onica, Telecomunicaciones y Redes de Informaci\'on, Escuela Polit\'ecnica Nacional (EPN), Ladr\'on de Guevara E11-253, Quito,  170525, Ecuador. (e-mail: cecilia.paredes@epn.edu.ec; luis.urquiza@epn.edu.ec).
}

\blfootnote{\noindent
D.~P.~Moya~Osorio is with the Centre for Wireless Communications (CWC), University of Oulu, Finland (e-mail: diana.moyaosorio@oulu.fi)
}

\vspace{-12.5mm}
\begin{abstract}
The flexibility and reconfigurability at the radio frequency (RF) front-end offered by the fluid antenna system (FAS) make this technology promising for providing remarkable diversity gains in networks with small and constrained devices. Toward this direction, this letter
compares the outage probability (OP) performance of
non-diversity and diversity FAS receivers undergoing spatially correlated Nakagami-$m$ fading channels. Although the system properties of FAS incur in complex analysis, we derive a simple yet accurate closed-form approximation by relying on a novel asymptotic matching method
 for the OP of a maximum-gain combining-FAS (MGC-FAS). The approximation is performed in two stages, the approximation of the cumulative density function (CDF) of each MGC-FAS branch, and then the approximation of the end-to-end CDF of the MGC-FAS scheme. With these results, closed-form expressions for the OP and the asymptotic OP are derived. Finally, numerical results validate our approximation of the MGC-FAS scheme and demonstrate its accuracy under different diversity FAS scenarios.
\end{abstract}

\begin{IEEEkeywords}
Asymptotic matching, maximum-gain combining-FAS (MGC-FAS), nakagami-$m$ fading, spatial correlation, outage probability.
\end{IEEEkeywords}

\vspace{-2.5mm}
\section{Introduction}
In recent years, \ac{MIMO} technology has been a fundamental part of the evolution of 5G and beyond to realize the impressive advancements in data rates and spectral efficiency. With \ac{MIMO}, diversity gain is guaranteed as long as the antennas are spatially separated by at least half wavelength. However, this may be challenging in very small devices of some Internet of Things (IoT) applications. Recently, a technology that uses liquid metals (e.g., gallium-indium eutectic, mercury, Galinstan) to design a software-controllable fluidic structure that, in its most basic implementation with only one \ac{RF} chain, allows a fluid radiator to switch among different positions in a small linear space, which has been referred to as a \ac{FAS}. In this way, \ac{FAS} can outperform traditional \ac{MIMO} regarding gains in diversity and multiplexing, specially when there exist space limitations at the receiver side \cite{Wong}.  

The performance of \ac{FAS} has been recently investigated in a number of works. For instance, in~\cite{Ref2}, Wong et al. introduced the novel concept of a single-antenna \ac{FAS} over correlated Rayleigh fading channels inspired by the advancement in mechanically flexible antennas. Afterward, in~\cite{Mukherjee}, Mukherjee et al. proposed a general framework for the evaluation of the second-order statistic  (i.e., the average level crossing rate) of the \ac{FAS} by considering time-varying fading channels. In~\cite{Yangyang}, Wong et al. revealed how the ergodic capacity scales with the system parameters of the \ac{FAS}.
In~\cite{Tlebaldiyeva1}, Tlebaldiyeva et al. derived a single-integral form of the \ac{OP} of a single-antenna \ac{FAS} over spatially correlated Nakagami-$m$ fading channels. A novel concept of fluid antenna multiple access (FAMA) was proposed in~\cite{AccessFAS}, which takes advantage of the deep fades suffered by the interference to attain a good channel condition  without demanding complex signal processing. In~\cite{Ref4}, Skouroumounis et al. presented an analytical framework based on stochastic geometry for evaluation the performance of large-scale FAS-aided cellular networks in terms of the \ac{OP}. In~\cite{New}, New et al. investigated the limit of \ac{FAS} performance via closed-form expressions of the OP and the diversity gain. In~\cite{Tlebaldiyeva2}, Tlebaldiyeva et al. recently compared non-diversity and diversity \ac{FAS} receivers undergoing $\alpha$-$\mu$ fading channels. Specifically, the diversity \ac{FAS} scheme considers enabling multiple ports of a fluid antenna and performing a combining technique with multi-port signals to further enhance \ac{FAS} performance further. Therein, a maximum-gain combining-\ac{FAS} (MGC-FAS) diversity scheme was investigated via Monte Carlo simulations due to the intricacy of the mathematical treatment for the underlying MGC-\ac{FAS}. In this sense, the OP and diversity gain for the MGC-\ac{FAS} are not known in closed-form expressions in the state-of-the-art. 

Motivated by the potential of the diversity FAS schemes to further enhance the capacity of future networks, with a great potential for IoT scenarios, we approximate the \ac{OP} and asymptotic OP for the MGC-FAS scheme in a closed-form fashion, which is useful for further evaluations of this scheme. For this purpose, we first approximate the \ac{CDF} of each MGC-FAS branch, and then, the \ac{CDF} of the MGC-FAS over correlated Nakagami-$m$ fading is derived. In both stages, the fitting parameters are estimated by employing the asymptotic matching method, proposed in \cite{Perim} that render a simple yet accurate approximation. To the best of the author's current knowledge, no prior work has provided a closed-form expression for the \ac{OP} of the MGC-FAS scheme in the literature. Additionally, in-depth insights about the impact of the antenna size on the \ac{OP} performance in diversity \ac{FAS} schemes are also provided.
\vspace{-2mm}

\section{System and Channel Models}
\begin{figure}[t]
\psfrag{A}[Bc][Bc][0.7]{$\mathrm{Transmitter}$}
\psfrag{B}[Bc][Bc][0.7]{$\mathrm{ports}$}
\psfrag{C}[Bc][Bc][0.7]{$1$}
\psfrag{D}[Bc][Bc][0.7]{$L$}
\psfrag{E}[Bc][Bc][0.6]{$W\times \lambda$}
\psfrag{G}[Bc][Bc][0.7][0.55]{$a)~M=1~ \mathrm{order}~\mathrm{FAS,}~\mathrm{Non}-\mathrm{diversity}$}
\psfrag{P}[Bc][Bc][0.7]{$b)~M- \mathrm{order}~\mathrm{MGC}-\mathrm{FAS}$}
\psfrag{L}[Bc][Bc][0.5]{$\frac{W}{M} \times \lambda$}
\psfrag{F}[Bc][Bc][0.7]{$g_{\mathrm{FAS}} =\max \left ( \left | g_1   \right |, \cdots, \left | g_L\right | \right )$}
\psfrag{W}[Bc][Bc][0.5]{$g_1^{\mathrm{FAS}} =\max \left ( \left | g_1   \right |, \cdots, \left | g_{L/M}\right | \right )$}
\psfrag{Z}[Bc][Bc][0.5]{$g_M^{\mathrm{FAS}} =\max \left ( \left | g_1   \right |, \cdots, \left | g_{L/M}\right | \right )$}
\psfrag{H}[Bc][Bc][0.7]{$g_{\mathrm{FAS}}^{\mathrm{MGC}}=\sum_{j=1}^{M}g^{\mathrm{FAS}}_j$}
\includegraphics[width=0.38\textwidth]{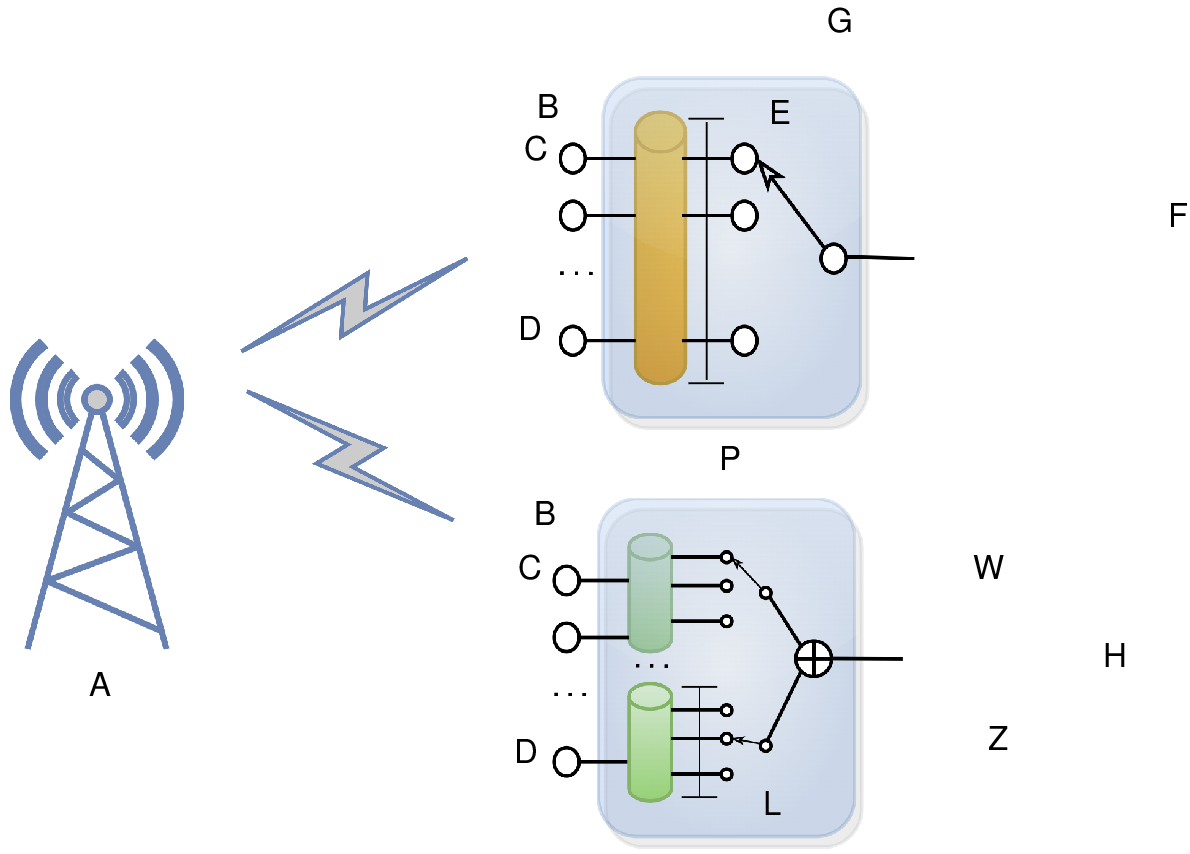}
\caption{System model for FAS-enabled communication with $a)$ MGC-FAS scheme and $b)$ non-diversity FAS configuration.
}
\label{SM}
\vspace{-4.5mm}
\end{figure}
Consider a point-to-point \ac{FAS} where the transmitter is equipped with a traditional antenna and the receiver with a fluid one with/without a diversity scheme, as described below.
\vspace{-6mm}
\subsection{Non-Diversity FAS Receiver}
In this scheme, the \ac{FAS}-receiver is built of a fluid antenna that can move freely along $L$-ports equally distributed along a linear dimension of length $W \lambda$, with $W$ being the antenna size and $\lambda$ the wavelength of the carrier frequency, as illustrated in Fig. \ref{SM}a. We assume that the \ac{FAS} can always switch to the best port for reaching the maximum received \ac{SNR}. Thus, the channel gain for the FAS can be expressed as
\vspace{-3mm}
\begin{equation}
\label{eq1}
g_{\mathrm{FAS}} =\max \left ( \left | g_1   \right |, \cdots, \left | g_L\right | \right ),
\end{equation}
where $g_{i}= \left | h_{i}\right |^{2}$ for $i\in\left \{ 1,\cdots,L \right \}$ denotes the channel gain of each port in the \ac{FAS}, and $h_i$ being modeled as a spatially-correlated Nakagami-$m$ fading channel because the antennas are located very close to each other in the linear space. The received SNR for non-diversity FAS receiver can be formulated as
\vspace{-3mm}
\begin{equation}\label{eq2}
\gamma=\frac{P \left | g_{\mathrm{FAS}}\right |}{N_0}= \overline{\gamma} \left | g_{\mathrm{FAS}}\right |, 
\end{equation}
where $\overline{\gamma}=\frac{P}{N_0}$ is the average transmit SNR, with $P$ being the transmit power and $N_0$ the additive
white Gaussian noise (AWGN) power.
\vspace{-4mm}
\subsection{Diversity FAS Receiver}
In this scheme, the entire FAS with $W\times \lambda$ is split into $M$ sub-FAS branches with a size of $W\times \lambda/M$, so the MGC-FAS equipment comprises $L/M$ ports per FAS branch, as shown in Fig. \ref{SM}b. In the MGC-FAS diversity scheme, the equivalent channel is the sum of the strongest channels of each FAS tube as
\vspace{-3mm}
\begin{equation}\label{eq3}
g_{\mathrm{FAS}}^{\mathrm{MGC}}=\sum_{j=1}^{M}g^{\mathrm{FAS}}_j,
\end{equation}
where $g_j^{\mathrm{FAS}} =\max \left ( \left | g_1   \right |, \cdots, \left | g_{L/M}\right | \right )$,
and $g_k=\left | h_{k}\right |^{2}$ for $k\in\left \{ 1,\cdots,L/M\right \}$ as in the non-diversity case, $h_{k}$ experiences correlated Nakagami-$m$ fading channel. The received \ac{SNR} for MGC-FAS receiver can be expressed as
\begin{equation}\label{eq4}
\gamma^{\mathrm{MGC}}=\frac{P \left | g_{\mathrm{FAS}}^{\mathrm{MGC}}\right | }{N_0}= \overline{\gamma} \left | g_{\mathrm{FAS}}^{\mathrm{MGC}}\right |.
\end{equation}
In the following sections an approximate statistical model for $g_{\mathrm{FAS}}^{\mathrm{MGC}}$ will be obtained, then the \ac{OP} distribution can be obtained .
\vspace{-5mm}
\section{Performance Analysis}
In this section, the \ac{OP} is considered to assess the performance of the FAS,  which is defined as the probability that the received SNR  is less than a threshold rate $\gamma_{th}$. 
\vspace{-5mm}
\subsection{Exact OP Distributions}
\textbf{Non-diversity case:} From~\cite[Eq.~(10)]{Tlebaldiyeva1}, the \ac{OP} for non-diversity FAS over correlated Nakagami-$m$ random variables (RVs) is given by
\vspace{-3mm}
\begin{align}\label{eq5}
P_{\mathrm{out}}(\gamma_{th})&= \frac{2^m m }{\Gamma(m)\Omega_1^{2m} }\int_{0}^{\sqrt{\frac{\gamma_{th}}{\overline{\gamma}}}}r_1^{2m}\exp\left ( -\frac{m r_1^2}{\Omega_1^{2}} \right )\nonumber \\ \times & \prod_{i=2}^{L}\left ( 1-Q_m\left ( \sqrt{\tfrac{2m\mu_i^2r_1^2}{\Omega_1^{2}\left ( 1-\mu_i^2 \right )}} \right ),\sqrt{\tfrac{2m\gamma_{th}}{\Omega_i^{2}\left ( 1-\mu_i^2 \right )\overline{\gamma}}} \right )dr_1,
\end{align}
where $\Gamma(\cdot)$ denotes the Gamma function, $Q_m(\cdot,\cdot)$ is the $m$-order Marcum Q-function, $m$ is the
 fading parameter, and $\Omega_k^2 $ indicates the average channel power of Nakagami-$m$ fading. Motivated by \cite{correlation}, we assume that the spatial correlation coefficient, denoted by, $\mu_i$, is given by
 \begin{equation}\label{eq6}
\mu^2 =\left | \frac{2}{L(L-1)}\sum_{i=1}^{L-1}(L-i)J_0\left ( \frac{2\pi iW}{L-1} \right )\right |, \hspace{1mm} \textit{for}\hspace{1mm} \mu=\mu_i~ \forall i, 
\end{equation}
where all the ports don't have a reference port or any port is a reference to any other port. Also, $J_0(\cdot)$
is the zero-order Bessel function of the first kind. 

\textbf{Diversity case:} Departing from \eqref{eq5} and applying the relationship  $F_{g_j^{\mathrm{FAS}}} (x_j)=P_{\mathrm{out}}(x_j \overline{\gamma} )$ for $j\in\left \{ 1,\cdots,M\right \}$, the \ac{CDF} of the equivalent channel gain in each sub-FAS branch of \eqref{eq3} is given by
\vspace{-3mm}
\begin{align}\label{eq7}
F_{g_j^{\mathrm{FAS}}} (x_j)&= \frac{2^m m }{\Gamma(m)\Omega_1^{2m} }\int_{0}^{\sqrt{x}}r_1^{2m}\exp\left ( -\frac{m r_1^2}{\Omega_1^{2}} \right ) \nonumber \\ \times \prod_{k=2}^{L/M} &\left ( 1-Q_m\left ( \sqrt{\tfrac{2m\mu_k^2r_1^2}{\Omega_1^{2}\left ( 1-\mu_k^2 \right )}} \right ),\sqrt{\tfrac{2mx_j}{\Omega_k^{2}\left ( 1-\mu_k^2 \right )}} \right )dr_1.
\end{align}
Then, the \ac{PDF} of \eqref{eq7}, i.e., $f_{g_j^{\mathrm{FAS}}} (x_j)$ can be obtained by computing the respective derivative.
Thus, by using Brennan’s approach~\cite{Brennan}, the end-to-end \ac{CDF} of the MGC-FAS in \eqref{eq3} is expressed as
\begin{align}\label{eqs8}
F_{g_{\mathrm{FAS}}^{\mathrm{MGC}}}(x)=&\int_{0}^{x}\int_{0}^{x-x_{M}}...\int_{0}^{x-\sum_{i=3}^{M}x_{i}} \int_{0}^{x-\sum_{i=2}^{M}x_{i}} \nonumber \\ \times
&f_{{{g_j^{\mathrm{FAS}}} },...,{g_{M}^{\mathrm{FAS}} }}(x_1,...,x_{M})dx_{1}dx_{2}...dx_{M-1}dx_{M},  
\end{align}
where $f_{{{g_j^{\mathrm{FAS}}} },...,{g_{M}^{\mathrm{FAS}} }}(x_1,...,x_{M})$ is the joint \ac{PDF} of the correlated branches.
%
Finally, the \ac{OP} for the MGC-FAS is computed as  $P^{\mathrm{MGC}}_{\mathrm{out}}(\gamma_{th}) = F_{g_{\mathrm{FAS}}^{\mathrm{MGC}}}\left ( \tfrac{\gamma_{th}}{\overline{\gamma}} \right )$. 
\subsection{Proposed OP Approximation}
It is noteworthy that the multi-fold integral in \eqref{eqs8} is quite intricate, thus the derivation of a closed-form solution appears to be unfeasible.
To overcome such limitation, an accurate approximation is proposed for the $P^{\mathrm{MGC}}_{\mathrm{out}}(\gamma_{th})$, which is obtained via the asymptotic matching method introduced in \cite{Perim}, as stated in the following proposition.
\begin{proposition}\label{Propos1}
The \ac{OP} expression of FAS undergoing correlated Nakagami-$m$ RVs can be approximated by
\begin{align}\label{eqs9}
P^{\mathrm{MGC}}_{\mathrm{out}}&\approx \frac{\Upsilon(\alpha_{\mathrm{MGC}} ,\frac{\gamma_{th}}{\beta_{\mathrm{MGC}}\overline{\gamma}})}{\Gamma{\left ( \alpha_{\mathrm{MGC}} \right )}},
\end{align}
where is $\Upsilon(\cdot,\cdot)$, is the lower incomplete gamma function~\cite[Eq.~(6.5.2)]{Abramowitz} and
\begin{subequations}
\vspace{-2mm}
\label{eqs10}
	\begin{align}
	\label{eq10a}
	\alpha_{\mathrm{MGC}}  =& \frac{L}{M}\sum_{j=1}^{M}m_j, \hspace{3mm} \beta_{\mathrm{MGC}}=\left (\frac{1}{\prod_{j=1}^{M}\frac{1}{(\beta_{j}^{\mathrm{FAS}} )^{\alpha_{j}^{\mathrm{FAS}}} }}  \right )^{1/\alpha_{\mathrm{MGC}} },
	\\
	\label{eq10b}
 \alpha_{j}^{\mathrm{FAS}}  =& \tfrac{L}{M}m_j, \hspace{3mm}
 \beta_{j}^{\mathrm{FAS}}=\left ( \frac{1}{\Gamma(\alpha_{j}^{\mathrm{FAS}} )a_{0,j}\alpha_{j}^{\mathrm{FAS}}} \right )^{1/\alpha_{j}^{\mathrm{FAS}} },
      \\
	\label{eq10c}
	a_{0,j} =&\frac{m_j^{m_j-1}}{\Gamma(m_j)\Omega_{1,j}^{2m_j} (m_j!)^{\tfrac{L}{M}-1}}\prod_{k=2}^{\tfrac{L}{M}}\left ( \frac{m_j}{\Omega_{k,j}^2\left ( 1-\mu_{k,j}^2 \right )} \right )^{m_j}.
	\end{align}
\end{subequations}
\end{proposition}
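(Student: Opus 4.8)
The plan is to realise the asymptotic matching technique of \cite{Perim} in the two stages anticipated above: first approximate the CDF $F_{g_{j}^{\mathrm{FAS}}}$ of each sub-FAS branch by a Gamma CDF, and then approximate the end-to-end CDF $F_{g_{\mathrm{FAS}}^{\mathrm{MGC}}}$ of the sum in \eqref{eqs8} by a second Gamma CDF. The only analytic device needed is an elementary matching rule: if a non-negative RV obeys $F(x)=a_{0}x^{t}+o(x^{t})$ as $x\to 0^{+}$, then, since $\Upsilon(\alpha,z)/\Gamma(\alpha)=z^{\alpha}/\Gamma(\alpha+1)+o(z^{\alpha})$, equating the leading terms of $\Upsilon(\alpha,x/\beta)/\Gamma(\alpha)$ and $a_{0}x^{t}$ forces $\alpha=t$ and $\beta=(\Gamma(\alpha)\,\alpha\,a_{0})^{-1/\alpha}$. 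Everything then reduces to computing, twice, the leading small-argument coefficient of the relevant CDF.

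\emph{Stage 1 (per-branch fit).} Starting from \eqref{eq7}, I would determine the behaviour of $F_{g_{j}^{\mathrm{FAS}}}(x_{j})$ as $x_{j}\to 0^{+}$. The substitution $r_{1}=\sqrt{x_{j}}\,u$ makes both Marcum-$Q$ arguments $O(\sqrt{x_{j}})$ and, after collecting powers of $x_{j}$, leaves an overall factor $x_{j}^{m_{j}}$ multiplying $\int_{0}^{1}u^{2m_{j}-1}\,e^{-m_{j}x_{j}u^{2}/\Omega_{1,j}^{2}}\prod_{k=2}^{L/M}\bigl(1-Q_{m_{j}}(\cdot,\cdot)\bigr)\,du$. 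Invoking the small-argument expansion $1-Q_{m}(a,b)=\frac{b^{2m}}{2^{m}m!}\,e^{-a^{2}/2}\,(1+o(1))$ (valid as $a,b\to 0$), observing that the $b$-arguments are $u$-independent, letting the Nakagami and Marcum exponentials tend to $1$, and evaluating $\int_{0}^{1}u^{2m_{j}-1}\,du=1/(2m_{j})$, the product over the $L/M-1$ non-reference ports supplies $x_{j}^{m_{j}(L/M-1)}(m_{j}!)^{-(L/M-1)}\prod_{k=2}^{L/M}\bigl(m_{j}/[\Omega_{k,j}^{2}(1-\mu_{k,j}^{2})]\bigr)^{m_{j}}$; combined with the prefactor of \eqref{eq7}, this gives $F_{g_{j}^{\mathrm{FAS}}}(x_{j})=a_{0,j}\,x_{j}^{(L/M)m_{j}}+o(x_{j}^{(L/M)m_{j}})$ with $a_{0,j}$ exactly as in \eqref{eq10c}. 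The matching rule then yields $\alpha_{j}^{\mathrm{FAS}}=(L/M)m_{j}$ and $\beta_{j}^{\mathrm{FAS}}$ as in \eqref{eq10b}, i.e., $g_{j}^{\mathrm{FAS}}$ is modelled as a Gamma$(\alpha_{j}^{\mathrm{FAS}},\beta_{j}^{\mathrm{FAS}})$ RV, whose PDF is then available in closed form for the next stage.

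\emph{Stage 2 (sum fit).} Treating the $M$ sub-FAS branches as mutually independent -- reasonable since they occupy disjoint spatial segments and the inter-segment correlation decays rapidly -- the joint PDF in \eqref{eqs8} factorises, so $F_{g_{\mathrm{FAS}}^{\mathrm{MGC}}}$ is the CDF of a sum of $M$ independent Gamma RVs with, in general, distinct scales; this is not itself Gamma, which is why a second matching step is needed. I would obtain its near-origin behaviour by iterating the convolution identity $\int_{0}^{x}t^{a-1}(x-t)^{b-1}\,dt=x^{a+b-1}B(a,b)$: since each branch PDF behaves like $x^{\alpha_{j}^{\mathrm{FAS}}-1}/[(\beta_{j}^{\mathrm{FAS}})^{\alpha_{j}^{\mathrm{FAS}}}\Gamma(\alpha_{j}^{\mathrm{FAS}})]$ near $0$, the $M-1$ successive convolutions give $f_{g_{\mathrm{FAS}}^{\mathrm{MGC}}}(x)\sim x^{\sum_{j}\alpha_{j}^{\mathrm{FAS}}-1}/[\Gamma(\sum_{j}\alpha_{j}^{\mathrm{FAS}})\prod_{j}(\beta_{j}^{\mathrm{FAS}})^{\alpha_{j}^{\mathrm{FAS}}}]$, hence $F_{g_{\mathrm{FAS}}^{\mathrm{MGC}}}(x)\sim x^{\alpha_{\mathrm{MGC}}}/[\Gamma(\alpha_{\mathrm{MGC}}+1)\prod_{j}(\beta_{j}^{\mathrm{FAS}})^{\alpha_{j}^{\mathrm{FAS}}}]$ with $\alpha_{\mathrm{MGC}}=\sum_{j}\alpha_{j}^{\mathrm{FAS}}=(L/M)\sum_{j}m_{j}$. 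A final pass of the matching rule returns $\alpha_{\mathrm{MGC}}$ and $\beta_{\mathrm{MGC}}$ exactly as in \eqref{eq10a}, and \eqref{eqs9} follows from $P^{\mathrm{MGC}}_{\mathrm{out}}(\gamma_{th})=F_{g_{\mathrm{FAS}}^{\mathrm{MGC}}}(\gamma_{th}/\overline{\gamma})$.

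The step I expect to be the main obstacle is the joint small-argument expansion in Stage 1: one must verify that the Marcum factor $e^{-a^{2}/2}$ and the Nakagami exponential perturb only higher-order powers of $x_{j}$, and that the product of the $L/M-1$ Marcum terms admits a termwise expansion, so that the leading power is exactly $(L/M)m_{j}$ with constant exactly $a_{0,j}$. A secondary, modelling-type caveat is the inter-branch independence used in Stage 2: residual spatial correlation between sub-FAS branches could in principle alter the leading constant of $F_{g_{\mathrm{FAS}}^{\mathrm{MGC}}}$, so the approximation is rigorous under that assumption, whose adequacy is then corroborated a posteriori by the numerical results.
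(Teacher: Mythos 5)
Your proposal is correct and takes essentially the same route as the paper's Appendix~\ref{appendix1}: the identical two-stage asymptotic matching, with your self-derived small-argument Marcum-$Q$ expansion and iterated-convolution asymptote playing the roles of the paper's citations to \cite[Eq.~(3)]{Xianchang} and \cite[Eq.~(4)]{Parente}, and the same leading quantities $a_{0,j}$, $b_0=(L/M)m_j$, $c_0$, $d_0$. One caveat: the match $c_0=\widetilde{c}_0$ actually yields $\beta_{\mathrm{MGC}}=\bigl(\prod_{j=1}^{M}\Gamma(\alpha_{j}^{\mathrm{FAS}})(\beta_{j}^{\mathrm{FAS}})^{\alpha_{j}^{\mathrm{FAS}}}\bigr)^{1/\alpha_{\mathrm{MGC}}}$ rather than the $\Gamma$-free product printed in \eqref{eq10a}, so your claim that the matching returns \eqref{eq10a} ``exactly'' inherits what appears to be a typo in the proposition statement rather than reflecting a gap in your own reasoning.
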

\begin{proof}
See Appendix~\ref{appendix1}.
\end{proof}
\begin{remark}\label{remark1}
Notice that the \ac{OP} expression is a novel and simple approximation that does not need to solve any involved integrals regarding the joint distribution of correlated fading channels of the MGC-FAS scheme.
\end{remark}
In order to attain more insights into the influence of system parameters
for the MGC-FAS performance,  an asymptotic closed-form expression for the \ac{OP} is derived. To this end, the asymptotic \ac{OP} is developed in the form $\mathrm{OP}^{\infty}\simeq \mathrm{G}_c\overline{\gamma}^{-\mathrm{G}_d}$~\cite{wang2003simple}, where $\mathrm{G}_c$ and $\mathrm{G}_d$ is the array gain and the diversity order, respectively. The asymptotic \ac{OP} is stated in the following Proposition.
\begin{proposition}\label{Propos2}
The asymptotic \ac{OP} expression for the proposed MGC-FAS over correlated Nakagami-$m$ RVs is given by
\begin{align}\label{eqs11}
P^{\mathrm{MGC}}_{\mathrm{out}}(\gamma_{th})&\simeq  \frac{(\frac{\gamma_{th}}{\beta_{\mathrm{MGC}}  \overline{\gamma}})^{\alpha_{\mathrm{MGC}} }}{\alpha_{\mathrm{MGC}}  \Gamma{\left (\alpha_{\mathrm{MGC}} \right )}},
\end{align}
\end{proposition}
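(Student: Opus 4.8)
The plan is to read off the high-\ac{SNR} behavior directly from the closed-form approximation already established in Proposition~\ref{Propos1}, using the small-argument expansion of the lower incomplete gamma function. First I would note that in \eqref{eqs9} the second argument $z = \gamma_{th}/(\beta_{\mathrm{MGC}}\overline{\gamma})$ of $\Upsilon(\alpha_{\mathrm{MGC}},z)$ vanishes as $\overline{\gamma}\to\infty$, so the asymptotic \ac{OP} is governed by the behaviour of $\Upsilon(\alpha_{\mathrm{MGC}},z)$ as $z\to 0^{+}$.

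Next I would invoke the Maclaurin-type series of the lower incomplete gamma function about the origin, $\Upsilon(s,z) = z^{s}\sum_{k\geq 0}\frac{(-z)^{k}}{k!\,(s+k)} = \frac{z^{s}}{s} - \frac{z^{s+1}}{s+1} + \cdots$ \cite{Abramowitz}, which holds for any $s>0$ and hence for $s=\alpha_{\mathrm{MGC}} = \frac{L}{M}\sum_{j=1}^{M}m_j > 0$. Keeping only the leading term, every discarded term carries an extra power of $z$ and is therefore $o(\overline{\gamma}^{-\alpha_{\mathrm{MGC}}})$, so $\Upsilon(\alpha_{\mathrm{MGC}},z)\simeq z^{\alpha_{\mathrm{MGC}}}/\alpha_{\mathrm{MGC}}$.

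Substituting this dominant term into \eqref{eqs9} and dividing by $\Gamma(\alpha_{\mathrm{MGC}})$ yields
\begin{align*}
P^{\mathrm{MGC}}_{\mathrm{out}}(\gamma_{th}) \simeq \frac{1}{\alpha_{\mathrm{MGC}}\,\Gamma(\alpha_{\mathrm{MGC}})}\left(\frac{\gamma_{th}}{\beta_{\mathrm{MGC}}\,\overline{\gamma}}\right)^{\alpha_{\mathrm{MGC}}},
\end{align*}
which is precisely \eqref{eqs11}. Finally, writing this in the canonical form $\mathrm{OP}^{\infty}\simeq \mathrm{G}_c\,\overline{\gamma}^{-\mathrm{G}_d}$ identifies the diversity order $\mathrm{G}_d = \alpha_{\mathrm{MGC}} = \frac{L}{M}\sum_{j=1}^{M}m_j$ and the array gain $\mathrm{G}_c = (\gamma_{th}/\beta_{\mathrm{MGC}})^{\alpha_{\mathrm{MGC}}}/(\alpha_{\mathrm{MGC}}\Gamma(\alpha_{\mathrm{MGC}}))$, with $\alpha_{\mathrm{MGC}}$ and $\beta_{\mathrm{MGC}}$ taken from \eqref{eq10a}.

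The step I would treat most carefully is the legitimacy of the single-term truncation: the expansion is asymptotically exact for the \emph{approximate} \ac{OP} of Proposition~\ref{Propos1}, so the resulting slope is only as faithful as the asymptotic-matching construction of the branch \ac{CDF}s in \eqref{eq10b}--\eqref{eq10c}. I would therefore cross-check the predicted exponent $\frac{L}{M}\sum_{j}m_j$ against the non-diversity limit $M=1$ (where it must reduce to $Lm$) and against the Monte Carlo curves, to confirm that the matching method indeed reproduces the correct diversity order and that no higher-order contribution was dropped prematurely.
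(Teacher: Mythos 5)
Your proposal is correct and follows essentially the same route as the paper: the paper's proof is precisely the substitution of $\Upsilon(a,x)\simeq x^a/a$ as $x\to 0$ into~\eqref{eqs9}. Your additional justification via the full Maclaurin series and the sanity checks on the diversity order are sound elaborations of that same one-line argument.
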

\begin{proof}
By using $\Upsilon \left (a,x \right )\simeq x^a/a  $ as $x\rightarrow 0$ into~\eqref{eqs9},~\eqref{eqs11} is obtained straightforwardly.
\end{proof}
\begin{remark}\label{remark2}
\vspace{-2mm}
From~\eqref{eqs11} and~\eqref{eq10a}, it is clear that the diversity order reduces to $\mathrm{G}_d=Lm$ when all the sub-FAS tubes experience the same fading, i.e., $m=m_i,~\forall i $. Moreover, $\mathrm{G}_d$ is directly influenced by the number of ports and the fading severity.
\end{remark}

\section{Numerical results and discussions} \label{sect:numericals}
\begin{figure}[t]
\psfrag{A}[Bc][Bc][0.7]{$\mathrm{\textit{n}=49}$}
\psfrag{B}[Bc][Bc][0.7]{$\mathrm{\textit{n}=100}$}
\psfrag{C}[Bc][Bc][0.7]{$\mathrm{\textit{n}=196}$}
 \includegraphics[width=\linewidth]{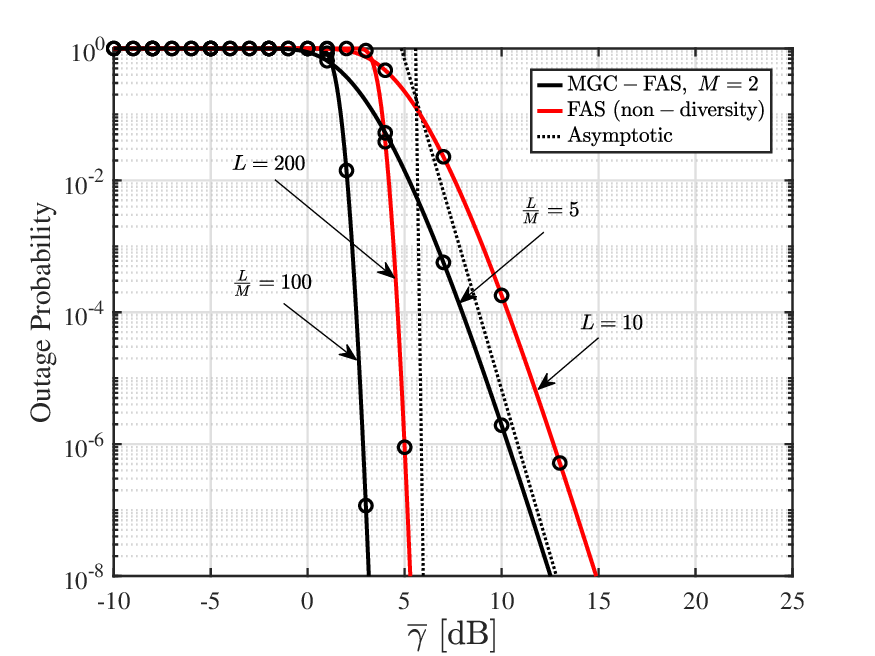}
\caption{\ac{OP} vs. $\overline{\gamma}$, for different numbers of ports by assuming $W=2$, $m=1$, and $\gamma_{th}=5$ dB. Markers denote the proposed approximation given in~\eqref{eqs9}, whereas the solid and dotted lines represent the analytical and the asymptotic solutions computed via~\eqref{eqs8} and~\eqref{eqs11}, respectively.}
\label{fig1}
\end{figure}
In this section, the impact of the system model parameters (e.g., the number of $M$ sub-FAS branches, the size of the antenna $W$, and the severity of fading) on the \ac{OP} performance is investigated, as well as the accuracy of the proposed approximations through illustrative examples. Unless stated otherwise, $\Omega_k=1, \forall k$ is considered for all plots, and the spatial correlation model is computed with the help of~\eqref{eq6}.  For the sake of comparison, $i)$ the traditional \ac{MRC} technique with uncorrelated antennas and $ii)$ the non-diversity FAS receiver, are included as a baseline in the \ac{OP} analysis.

In Fig~\ref{fig1}, we show the \ac{OP} versus the $\overline{\gamma}$ for $W=2$, $m=1$, $\gamma_{th}=5$ dB and by varying the number of ports $L=\left \{10,200 \right \}$ in the non-diversity FAS receiver. Hence, when setting  $M=2$-order MGC-FAS, it means that there are two sub-FAS tubes with $\tfrac{L}{M}=\left \{5,100 \right \}$ ports each. In this figure, the accuracy of the proposed approach in~\eqref{eqs9} to approximate the exact solution computed via~\eqref{eqs8} is evaluated. Note that all approximate curves are tight 
to the analytical solutions for the entire average \ac{SNR} range.
It is worth noting that as $M$ (i.e., the FAS tubes) increases, the exact formulation in~\eqref{eqs8} becomes computationally hard to compute, prone to convergence, or even unworkable. Hence, our simple and accurate approximation, with negligible computational cost, proves useful for the performance analysis for diversity FAS schemes. 
Considering the asymptotic curves, note that the \ac{OP} decline is steeper (i.e., good \ac{OP} performance) as the number of ports in the sub-FAS tubes or the severity fading $m$ increases. Contrariwise, the \ac{OP} is affected when the number of sub-FAS ports or the $m$ parameter decreases, so the \ac{OP} slope is  is less pronounced. These results are in coherence with the insights examined in Remark~\ref{remark2}. On the other hand, it is observed that the asymptotic \ac{OP} of the MGC-FAS quickly matches the diversity order of the exact solution for $\tfrac{L}{M}=5$. Conversely, for the scenario with $\tfrac{L}{M}=5$, the asymptotic \ac{OP} fits the correct asymptotic behavior for relatively lower operational \ac{OP} values. 

Henceforth, the approximation curves are represented in all plots with solid lines for visibility purposes. In Fig.~\ref{fig2},  the \ac{OP} is depicted as a function of the number of ports of the FAS scheme. For instance, the fixed value $L=100$ on the x-axis for a non-diversity FAS scheme corresponds to $\tfrac{L}{M}=50$ in the $2$-order MGC-FAS technique. The remaining parameters are set to: $\overline{\gamma}=1$ dB, $m=1$, and $\gamma_{th}=2$ dB. 
\begin{figure}[t]
\psfrag{A}[Bc][Bc][0.7]{$\mathrm{\textit{n}=49}$}
\psfrag{B}[Bc][Bc][0.7]{$\mathrm{\textit{n}=100}$}
\psfrag{C}[Bc][Bc][0.7]{$\mathrm{\textit{n}=196}$}
 \includegraphics[width=\linewidth]{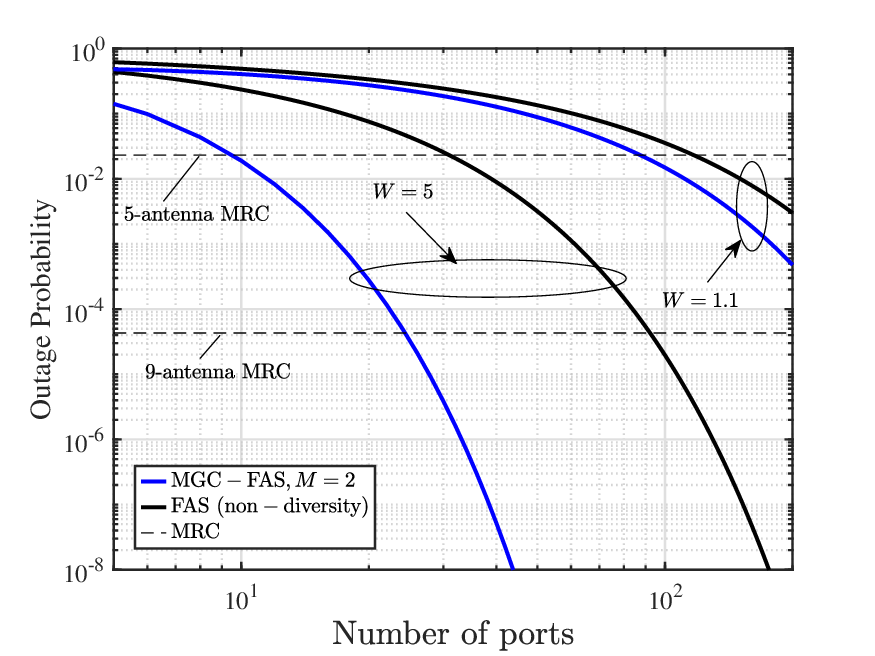}
\caption{\ac{OP} vs. number of ports by varying $W$ for $\overline{\gamma}=1$ dB, $m=1$, and $\gamma_{th}=2$ dB. Solid lines denote the proposed approximation in~\eqref{eqs9}.}
\label{fig2}
\vspace{-4.5mm}
\end{figure}
\begin{figure}[t]
\psfrag{A}[Bc][Bc][0.7]{$\mathrm{\textit{n}=49}$}
\psfrag{B}[Bc][Bc][0.7]{$\mathrm{\textit{n}=100}$}
\psfrag{C}[Bc][Bc][0.7]{$\mathrm{\textit{n}=196}$}
 \includegraphics[width=\linewidth]{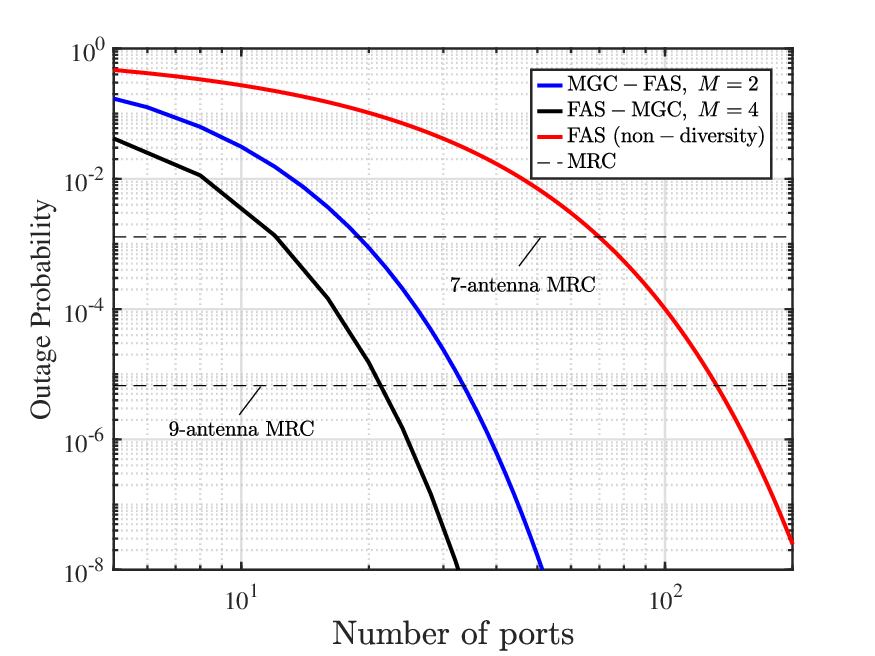}
\caption{\ac{OP} vs. number of ports with $M = \left \{ 2,4 \right \}$-order MGC-FAS for $W=3$, $m=1$, $\overline{\gamma}=1$ dB, and $\gamma_{th}=2$ dB. Solid lines denote the proposed approximation in~\eqref{eqs9}.}
\label{fig3}
\vspace{-4.5mm}
\end{figure}
In these results, the influence of changing the antenna size $W$ on the \ac{OP} behavior for MGC-FAS receivers is investigated. Note that large $W$ values (i.e., more space in the MGC-FAS) favor the \ac{OP} performance compared to the non-diversity FAS scheme. In particular, obtaining a higher performance gain from MGC-FAS over non-diversity FAS highly depends on the size coefficient $W$ for a fixed number of ports. Moreover, non-diversity/diversity FAS schemes beat the \ac{MRC} method. For instance, the \ac{OP} for the $9$-antenna \ac{MRC} is exceeded when the FAS is deployed by assuming $W=5$, $L=91$, and $\tfrac{L}{M}=24$ ports for non-diversity FAS and MGC-FAS schemes, respectively.
 
In Fig.~\ref{fig3}, the \ac{OP} is displayed as a function of the number of ports, as explained in Fig.~\ref{fig2}. Herein, the achievable \ac{OP} is examined by  comparing $M = \left \{ 2,4 \right \}$-order MGC-FAS with the non-diversity FAS receiver for $W=3$, $m=1$, $\overline{\gamma}=1$, and $\gamma_{th}=2$ dB. All plots show that increasing the sub-FAS tubes (i.e., $M$) greatly benefits the performance of the \ac{OP} compared to the non-diversity FAS receiver. In fact, this gain gap between $M$-order MGC-FAS and the FAS could be further boosted by increasing the size of the antenna $W$, as explained in Fig.~\ref{fig2}. Furthermore, $9$-antenna \ac{MRC} is surpassed when the FAS is assumed with  $\tfrac{L}{M}=20$, $\tfrac{L}{M}=34$ and $L=132$ ports for $4$-order MGC-FAS, $2$-order MGC-FAS, and non-diversity FAS schemes, respectively. This fact confirms the importance of using diversity in the FAS receiver.

Finally, Fig.~\ref{fig4} illustrates the OP versus $\gamma_{th}$ by varying the number of ports of non-diversity/diversity schemes for $W=2$, $\overline{\gamma}=0$ dB, $m=2$. Overall, it can be observed that increasing $\gamma_{th}$ leads to a noteworthy failure in the \ac{OP}. The best \ac{OP} performances are attained with a $2$-order MGC-FAS receiver. In addition, the $5$-antenna MRC is more quickly beaten using $2$-order MGC-FAS than its non-diversity FAS counterpart, as expected.

\begin{figure}[t]
\psfrag{A}[Bc][Bc][0.7]{$\mathrm{\textit{n}=49}$}
\psfrag{B}[Bc][Bc][0.7]{$\mathrm{\textit{n}=100}$}
\psfrag{C}[Bc][Bc][0.7]{$\mathrm{\textit{n}=196}$}
 \includegraphics[width=\linewidth]{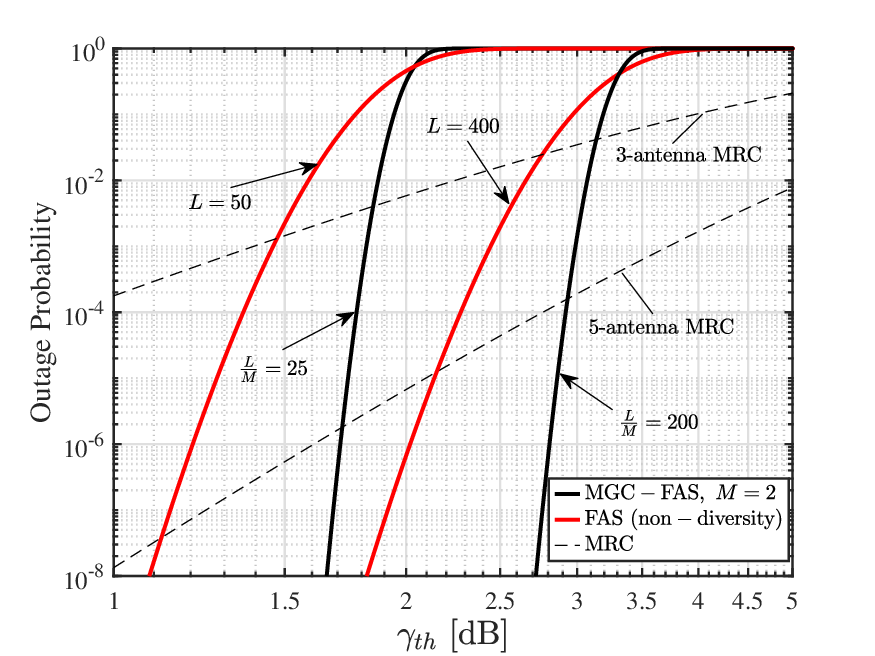}
\caption{\ac{OP} vs. $\gamma_{th}$ by varying the number of ports for $W=2$, $\overline{\gamma}=1$ dB, and $m=2$. Solid lines denote the proposed approximation in~\eqref{eqs9}.}
\label{fig4}
\vspace{-4.5mm}
\end{figure}
\vspace{-8mm}
\section{Conclusions}
In this letter, we examined the OP performance of a point-to-point FAS by
assuming non-diversity and diversity FAS receivers undergoing correlated Nakagami-$m$ fading channels. Specifically, a novel asymptotic matching method is employed to approximate the CDF of the MGC-FAS receiver in an analytically tractable way without incurring multi/single-fold integrals. With this by-product, a simple closed-form expression of the OP for the MGC-FAS scheme was derived. Furthermore, useful insights were provided concerning how the antenna size $W$ influences the \ac{OP} performance of the MGC-FAS. Specifically, the MGC-FAS scheme provided remarkable gains in terms of the OP over the non-diversity FAS when the $W$ values are large enough.
\appendices

\section{Proof of Proposition \ref{Propos1} }
\label{appendix1} 
In the first stage, a suitable approximation for the \ac{CDF} of each sub-FAS branch given in \eqref{eq7} is obtained by using a Gamma distribution. Hence, we employ the asymptotic matching method. Toward that~\cite[Eq.~(3)]{Xianchang} is replaced into~\eqref{eq7}, which is re-expressed as
\begin{align}\label{Apeq1}
F_{g_j^{\mathrm{FAS}}} (x_j)&\approx \frac{2{m{_j}} ^{m_j} }{\Gamma(m_j)\Omega_{1,j}^{2m_j} }\underset{I_1}{\underbrace{\int_{0}^{\sqrt{x_j}}r_1^{2m_j-1}\exp\left ( -\frac{m_j r_1^2}{\Omega_{1,j}^{2}} \right )}}\nonumber \\ \times & \underset{I_1}{\underbrace{\prod_{k=2}^{\tfrac{L}{M}}\left ( \tfrac{\left ( \tfrac{m_j x_j}{\Omega_{k,j}^{2}\left ( 1-\mu_{k,j}^2 \right )}  \right )^{m_j} \exp\left (-\tfrac{m_j\mu_{k,j}^2r_1^2}{\Omega_{1,j}^{2}\left ( 1-\mu_{k,j}^2 \right )}  \right )}{m_j!} \right )dr_1}}.
\end{align}
Here, with the aid of~\cite[Eq.~(3.381.1)]{Gradshteyn},~$I_1$ can be evaluated in closed-fashion in terms of the incomplete Gamma function, i.e.,  $\Upsilon(\cdot,\cdot)$. Then, by applying the relationship $\Upsilon \left (a,x \right )\simeq x^a/a  $ as $x\rightarrow 0$,  the asymptotic behavior of the \ac{CDF} of each sub-FAS branch in the form $F_{g_j^{\mathrm{FAS}}} (x_j)\simeq a_0 x_j^{b_0}$, can be formulated as
\vspace{-3mm}
\begin{align}\label{Apeq2}
\vspace{-2mm}
F_{g_j^{\mathrm{FAS}}} (x_j)&\simeq\underset{a_0}{\underbrace{\tfrac{m_j^{m_j-1}  m_j!^{1-\tfrac{L}{M}} }{\Gamma(m_j)(\Omega_{1,j})^{2m_j} }\prod_{k=2}^{\tfrac{L}{M}}\left ( \tfrac{m_j}{\Omega_{k,j}^2(1-\mu_{k,j}^2)} \right )^{m_j}}} x_j^{  \overset{b_0}{\overbrace{m_j\tfrac{L}{M} }} }.
\end{align}
To find the shape parameters of the Gamma distribution to approximate \eqref{eq7}, the asymptotic Gamma \ac{CDF}\footnote{To asymptotically approximate~$ F_{g_j^{\mathrm{FAS}}} (x_j)\approx \tfrac{\Upsilon\left ( \alpha_{j}^{\mathrm{FAS}} , \frac{x_j}{\beta_{j}^{\mathrm{FAS}} } \right )}{\Gamma{(\alpha_{j}^{\mathrm{FAS}} )}}$, the relationship $\Upsilon \left (a,x \right )\simeq x^a/a  $ as $x\rightarrow 0$, is employed.}
of each sub-FAS branch  is used to obtain the following expression
\vspace{-5mm}
        \begin{align}
         \label{Apeq3}
     \widetilde{F}_{g_j^{\mathrm{FAS}}} (x_j)\simeq  \underset{\widetilde{a}_0}{\underbrace{\tfrac{1}{{\beta_{j}^{\mathrm{FAS}}}^{\alpha_{j}^{\mathrm{FAS}}} \alpha_{j}^{\mathrm{FAS}} \Gamma ({\alpha_{j}^{\mathrm{FAS}} } )}}} x_j^{\overset{\widetilde{b}_0}{\overbrace{\alpha_{j}^{\mathrm{FAS}} }}}.
     \vspace{-2mm}
         \end{align}
Then, by applying the asymptotic matching~\cite{Parente}, i.e., $a_0=\widetilde{a}_0$ and $b_0=\widetilde{b}_0$, the shape parameters $\alpha_{j}^{\mathrm{FAS}} $ and $\beta_{j}^{\mathrm{FAS}}$ of the Gamma distribution to approximate \eqref{eq7} can be expressed as~\eqref{eq10b}. In the second stage, we approximate the \ac{CDF} of the MGC-FAS in \eqref{eqs8} by using again the Gamma distribution via the asymptotic matching technique. For this purpose, the approximate \ac{PDF} and \ac{CDF} of the MGC-FAS can be expressed as
\vspace{-3mm}
\begin{subequations}
\label{Apeq4}
	\begin{align}
	\label{Apeq4a}	\widetilde{f}_{g_{\mathrm{FAS}}^{\mathrm{MGC}}}(x)=&\tfrac{x^{\overset{\widetilde{d}_0}{\overbrace{\alpha_{\mathrm{MGC}}-1}}}
 }{\underset{\widetilde{c}_0}{\underbrace{\Gamma\left ( \alpha_{\mathrm{MGC}} \right ){\beta_{\mathrm{MGC}}}^{ \alpha_{\mathrm{MGC}} } }}}\exp\left ( -\tfrac{x}{\beta_{\mathrm{MGC}}
} \right )
	\\
	\label{Apeq4b}	\widetilde{F}_{g_{\mathrm{FAS}}^{\mathrm{MGC}}}(x)= & \tfrac{\Upsilon(\alpha_{\mathrm{MGC}} ,\tfrac{x}{\beta_{\mathrm{MGC}}})}{\Gamma{\left ( \alpha_{\mathrm{MGC}} \right )}},
	\end{align}
\end{subequations}
where $\widetilde{c}_0$ and $\widetilde{d}_0$ are the linear and the angular coefficients that capture the asymptotic behavior of the approximate distribution, i.e., $\widetilde{f}_{g_{\mathrm{FAS}}^{\mathrm{MGC}}}(x)$. Now, we are interested in the asymptotic behavior of the PDF of the sum $f_{g_{\mathrm{FAS}}^{\mathrm{MGC}}}(x)$ given in \eqref{eq3}. Hence, by appropriately substituting the shape parameters of the summands, i.e., $\alpha_{j}^{\mathrm{FAS}} $ and $\beta_{j}^{\mathrm{FAS}}$ into~\cite[Eq.~(4)]{Parente} and after some manipulations, the linear and angular coefficients that govern the asymptote of the sum, $f_{g_{\mathrm{FAS}}^{\mathrm{MGC}}}(x)\simeq c_0 x^{d_0}$, are given by
\vspace{-3mm}
\begin{align}\label{Apeq5}
 c_0 = \frac{\prod_{j=1}^{M}\left [ \tfrac{1}{\Gamma\left ( \alpha_{j}^{\mathrm{FAS}} \right ){\beta_{j}^{\mathrm{FAS}}}^{\alpha_{j}^{\mathrm{FAS}}} }  \right ]}{\Gamma \left( \sum_{j=1}^{M}\alpha_{j}^{\mathrm{FAS}} \right )} , \hspace{2mm}  d_0  = -1+\sum_{j=1}^{M}\alpha_{j}^{\mathrm{FAS}}.
\end{align}
Next, by matching $c_0=\widetilde{c}_0$ and $d_0=\widetilde{d}_0$, the fitting parameters of \eqref{Apeq4} are found straightforwardly. Finally,~\eqref{eqs9} is obtained with the help of~\eqref{Apeq4b} by setting $P^{\mathrm{MGC}}_{\mathrm{out}}(\gamma_{th}) \approx \widetilde{F}_{g_{\mathrm{FAS}}^{\mathrm{MGC}}}\left ( \tfrac{\gamma_{th}}{\overline{\gamma}} \right )$. This completes the proof.

\bibliographystyle{ieeetr}
\bibliography{bibfile}

\end{document}